\newtheorem{thm}{Theorem}
\newtheorem{corollary}[thm]{Corollary}
\newtheorem{lemma}[thm]{Lemma}
\newtheorem{proposition}[thm]{Proposition}
\theoremstyle{definition}
\theoremstyle{remark}
\renewcommand{\phi}{\varphi}
\newcommand{\ang}[1]{\ensuremath{\langle#1\rangle}}
\newcommand{\ket}[1]{\ensuremath{|#1\rangle}}
\newcommand{\bra}[1]{\ensuremath{\langle#1|}}
\newcommand{\Q}{\smallskip\noindent\textbf{Q: }}
\newcommand{\A}{\smallskip\noindent\textbf{A: }}
\newcommand{\noprint}[1]{\relax}
\newcommand{\R}{\mathbb R}
\title{Negative Probability}
\author{Andreas Blass}
\address{Mathematics Department\\
University of Michigan\\
Ann Arbor, MI 48109--1043, U.S.A.}
\email{ablass@umich.edu}
\author{Yuri Gurevich}
\address{Microsoft Research\\
One Microsoft Way\\
Redmond, WA  98052, U.S.A.}
\email{gurevich@microsoft.com}
\begin{document}
\maketitle

\begin{abstract}
This article was written for the Logic in Computer Science column in the February 2015 issue of the Bulletin of the European Association for Theoretical Computer Science. The intended audience is general computer science audience.

The uncertainty principle asserts a limit to the precision with which position $x$ and momentum $p$ of a particle can be known simultaneously. You may know the probability distributions of $x$ and $p$ individually but the joint distribution makes no physical sense. Yet Wigner exhibited such a joint distribution $f(x,p)$. There was, however, a little trouble with it: some of its values were negative. Nevertheless Wigner's discovery attracted attention and found applications. There are other joint distribution, all with negative values, which produce the correct marginal distributions of $x$ and $p$. But only Wigner's distribution produces the correct marginal distributions for all linear combinations of position and momentum. We offer a simple proof of the uniqueness and discuss related issues.
\end{abstract}

\section{Introduction}
\label{sec:intro}

``Trying to think of negative probabilities,'' wrote Richard Feynman, ``gave me a cultural shock at first'' \cite{Feynman1987}. Yet quantum physicists tolerate negative probabilities. Feynman himself studied, in the cited paper, a probabilistic trial with four outcomes with probabilities $0.6$, $-0.1$, $0.3$ and $0.2$.

We were puzzled. The standard interpretation of probabilities defines the probability of an event as the limit of its relative frequency in a large number of trials. ``The mathematical theory of probability gains practical value and an intuitive meaning in connection with real or conceptual experiments'' \cite[\S I.1]{Feller}. Negative probabilities are obviously inconsistent with the frequentist interpretation. Of course, that interpretation comes with a tacit assumption that every outcome is observable. In quantum physics some outcomes may be unobservable. This weakens the frequentist argument against negative probabilities but does not shed much light on the meaning of negative probabilities.

In the discrete case, a probabilistic trial can be given just by a set of outcomes and a probability function that assigns nonnegative reals to outcomes. One can generalize the notion of probabilistic trial by allowing negative values of the probabilistic function. Feynman draws an analogy between this generalization and the generalization from positive numbers, say of apples, to integers. But a negative number of apples may be naturally interpreted as the number of apples owed to other parties. We don't know any remotely natural interpretation of negative probabilities.

We attempted to have a closer look on what goes on.

Heisenberg's uncertainty principle asserts a limit to the precision with which position $x$ and momentum $p$ of a particle can be known simultaneously: $\displaystyle \sigma_x \sigma_p \ge \hbar/2$ where $\sigma_x, \sigma_p$ are the standard deviations and $\hbar$ is the (reduced) Planck constant. You may know the probability distributions (or the density function or the probability function; we will use the three terms as synonyms) of $x$ and $p$ individually but the joint probability distribution with these marginal distributions of $x$ and $p$ makes no physical sense\footnote{For general information about joint probability distributions and their marginal distributions see \cite[\S IX.1]{Feller}}. Does it make mathematical sense? More exactly, does there exist a joint distribution with the given marginal distributions of $x$ and $p$.

In 1932, Eugene Wigner exhibited such a joint distribution \cite{Wigner1932}. There was, however, a little trouble with Wigner's function. Some of its values were negative. The function, Wigner admits, ``cannot be really interpreted as the simultaneous probability for coordinates and momenta.'' But this, he continues, ``must not hinder the use of it in calculations as an auxiliary function which obeys many relations we would expect from such a probability'' \cite{Wigner1932}.  Probabilistic functions that can take negative values became known as \emph{quasi-probability distributions}.

Richard Feynman described a specific quasi-probability distribution for discrete quantities, two components of a particle's spin \cite{Feynman1987}. The uncertainty principle implies that these two quantities can't have definite values simultaneously. So it seems plausible that an attempt to assign joint probabilities would again, as in Wigner's case, lead to something strange --- like negative probabilities. ``Trying to think of negative probabilities gave me a cultural shock at first \dots It is usual to suppose that, since the probabilities of events must be positive, a theory which gives negative numbers for such quantities must be absurd. I should show here how negative probabilities might be interpreted'' \cite{Feynman1987}. His attitude toward negative probabilities echoes that of Wigner: a quasi-probability distribution may be used to simplify intermediate computations. The meaning of negative probabilities remains unclear. Those intermediate computations may not have any physical sense. But if the final results make physical sense and can be tested then the use of a quasi-probability is justified.

It bothered us that both Wigner and Feynman apparently pull their quasi-probability distributions from thin air. In particular, Wigner writes that his function ``was found by L. Szil\'{a}rd and the present author some years ago for another purpose'' \cite{Wigner1932}, but he doesn't give a reference, and he doesn't give even a hint about what that other purpose was. He also says that there are lots of other functions that would serve as well, but none without negative values. He adds that his function ``seemed the simplest.''

We investigated the matter and made some progress. We found a characterization of Wigner's function that might be considered objective.

\begin{proposition}[Wigner Uniqueness] \label{pro:unique}
Wigner's function is the unique quasi-distribution on the phase space that yields the correct marginal distributions not only for position and momentum but for all their linear combinations.
\end{proposition}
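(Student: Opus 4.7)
The plan is to recast the condition ``correct marginals for all linear combinations of $x$ and $p$'' as the condition that the Radon transform of the quasi-distribution is prescribed, and then invoke the Fourier slice theorem to conclude that the quasi-distribution itself is prescribed. Since Wigner's function is known to satisfy the marginal condition, uniqueness of a solution will do the job.

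More concretely, suppose $f_1(x,p)$ and $f_2(x,p)$ are two quasi-distributions on phase space, both of which yield the correct marginal density for every observable of the form $\alpha x + \beta p$ with $(\alpha,\beta)\in\R^2$. Let $h=f_1-f_2$. Then for every $(\alpha,\beta)$ and every $s\in\R$,
\[
\int\!\!\int h(x,p)\,\delta(\alpha x+\beta p-s)\,dx\,dp \;=\; 0.
\]
Taking the one-dimensional Fourier transform of the left side in $s$ and swapping the order of integration gives
\[
\int\!\!\int h(x,p)\,e^{-it(\alpha x+\beta p)}\,dx\,dp \;=\; \hat h(t\alpha,t\beta).
\]
Thus $\hat h$ vanishes on every line through the origin in the $(\xi,\eta)$-plane, that is, everywhere; inverting the Fourier transform yields $h\equiv 0$, so $f_1=f_2$. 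Combined with the known fact that Wigner's function does reproduce the correct quantum marginal for every quadrature $\alpha x+\beta p$ (this is essentially the original motivation for the definition), this gives the uniqueness claim.

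The substantive mathematical content is just the Fourier slice theorem, which the expository tone of the paper probably allows us to cite or, alternatively, to sketch in one line as above. The main obstacle I anticipate is analytic hygiene rather than conceptual: a priori a quasi-distribution need only be an integrable (or tempered-distributional) function, so some care is required to justify that its marginals and its Fourier transform are well-defined and that Fubini applies. The clean way to handle this is to restrict attention to quasi-distributions in $L^1(\R^2)$ or in the Schwartz space, inside which the above argument is rigorous, and to remark that the quasi-distributions of physical interest (including Wigner's function for normalizable states) lie in such a class. A second, lesser obstacle is that one should observe that the hypothesis is really about \emph{all} linear combinations $\alpha x+\beta p$, not just $x$ and $p$ separately; if only the axial marginals were prescribed, $\hat h$ would vanish only on two lines and uniqueness would fail. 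That observation both motivates and validates the strength of the hypothesis in the proposition.
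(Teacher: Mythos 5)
Your uniqueness half is correct and is in substance the paper's own route: the paper's J2M Lemma and its Corollary are exactly the projection/Fourier-slice theorem you invoke, so your observation that prescribing the marginals of all linear combinations prescribes $\hat h$ on every line through the origin, hence $h\equiv 0$, matches the paper's argument (and your closing remark about the axial marginals alone not sufficing is the paper's introductory example in disguise). The genuine gap is in the other half of the statement. The proposition asserts that Wigner's function \emph{is} the quasi-distribution with this tomographic property, and that is where the substance of the paper's proof lies: one must compute the quantum-mechanical prediction for the distribution of $z=ax+bp$ in a state $\ket\psi$, i.e.\ evaluate $\bra\psi e^{-i\zeta(aX+bP)}\ket\psi$, which the paper does via Zassenhaus's formula and $[X,P]=i\hbar$ (Lemma~\ref{lem:key}), and then invert the two-dimensional Fourier transform (the $\alpha$-integration yielding a delta function) to arrive at Wigner's formula \eqref{wigner}; the reversibility of that derivation is what certifies that Wigner's function really does produce the correct marginal for every $ax+bp$. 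You dispose of this step by citing it as a ``known fact \dots essentially the original motivation for the definition,'' but the familiar, textbook property of Wigner's function is only the correctness of the two axial marginals (for $x$ and $p$ separately); correctness for \emph{all} quadratures is precisely the nontrivial tomographic property being characterized (Bertrand--Bertrand), and without establishing it your argument proves only that \emph{at most one} quasi-distribution has the property, not that Wigner's is it. Note also a difference in character: the paper's derivation is constructive, reconstructing Wigner's formula from the marginal conditions rather than presupposing its form --- which is much of the point, given the authors' complaint that Wigner's function is usually ``pulled from thin air.'' Your attention to analytic hygiene (restricting to a class where Fubini and Fourier inversion are legitimate) is reasonable and in line with the paper's avowedly informal treatment of convergence.
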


\begin{quotation}
\smallskip\noindent\textbf{Quisani%
\footnote{Readers of this column may remember Quisani, an inquisitive former student of the second author.}: }%
Wait, I don't understand the proposition.
Wigner's function is not a true distribution, so the notion of marginal distribution of Wigner's function isn't defined. Also, what does it mean for a marginal to be correct?

\smallskip\noindent\textbf{Authors%
\footnote{speaking one at a time}: }%
The standard definition of marginals works also for quasi-probability distributions.
A marginal distribution is correct if it coincides with the prediction of quantum mechanics. We'll return to these issues in \S\ref{sec:j2m} and \S\ref{sec:wigner} respectively.

\Q To form a linear combination $ax+bp$ of position $x$ and momentum $p$ you add $x$, which has units of length like centimeters, and $p$, which has momentum units like gram centimeters per second. That makes no sense.

\A We are adding $ax$ and $bp$.  Take $a$ to be a momentum and $b$ to be a length; then both $ax$ and $bp$ have units of action (like gram centimeters squared per second), so they can be added.  If you want to make $ax+bp$ a pure number, divide by $\hbar$.

\Q Finally, is it obvious that Wigner's quasi-distribution is not determined already by the correct marginal distributions for just the position and momentum, without taking into account other linear combinations?

\A This is obvious. There are modifications of Wigner's quasi-distribution that still give the correct marginal distributions for position and momentum. For an easy example, choose a rectangle $R$ centered at the origin in the $(x,p)$ phase plane and modify Wigner's $f(x,p)$ by adding a constant $c$ (resp. subtracting $c$) when $(x,p)\in R$ and the signs of $x$ and $p$ are the same (resp. different).  For a smoother modification, you could add $cxp\exp(-ax^2-bp^2)$ where $a,b,c$ are positive constants (of appropriate dimensions).
\end{quotation}

The idea to consider linear combinations of position and momentum came from Wigner's paper \cite{Wigner1932} where he mentions that projections to such linear combinations preserve the expectations. In fact, the projections give rise to the correct marginals. This led us to the proposition.

In the case of Feynman's quasi-distribution mentioned above, one can't use linear combinations of those two spin components to characterize the distribution. Nor is there a characterization using the spin component in yet another direction. Furthermore, if we only require the correct marginal distributions for the $x$ and $z$ spins, then there are genuine, nonnegative joint probability distributions with those marginals.

Our investigation was supplemented by digging into the literature and talking to our colleagues, especially Nathan Wiebe. That brought us to ``Quantum Mechanics in Phase Spaces: An Overview with Selected Papers'' \cite{Zachos+}. It turned out that there was another, much earlier, approach to characterizing the Wigner quasi-probability distribution. The main ingredient for that earlier approach is a proposal by Hermann Weyl \cite[\S IV.14]{Weyl} for associating Hermitian operators on $L^2$ to well-behaved functions $g(x,p)$ of position and momentum. Jos\'e Enrique Moyal used Weyl's correspondence to characterize Wigner's quasi-distribution in terms of only the expectation values but for a wider class of functions rather than the marginal distributions for just the linear functions of position and momentum \cite{Moyal}.
There is a trade-off here. The class of functions is wider but the feature to match is narrower.

George Baker proved that any quasi-distribution on the position-momentum phase-space, satisfying his ``quasi-probability distributional formulation of quantum mechanics,'' is the Wigner function \cite{Baker}.
The problem of an objective characterization of Wigner's function also attracted the attention of Wigner himself \cite{Wigner1971,OW}.

The volume \cite{Zachos+} does not contain ``our'' characterization of Wigner's function but it is known and due to Jacqueline and Pierre Bertrand \cite{Bertrand}. They found an astute name for the approach: tomographic. The tomographic approach gives an additional confirmation of the fact that behavior of our quantum system is not classical. The approach can be used to establish that the behavior of some other quantum systems is not classical. It has indeed been used that way in quantum optics; see \cite{Smithey,Ourjoumtsev} for example.

Still, in our judgment, our proof of the uniqueness of Wigner's function is simpler and more direct than any other in the literature, and so we present it in \S\ref{sec:wigner}. In section \S\ref{sec:weyl} we establish Moyal's characterization of Wigner's function.

\S\ref{sec:feynman} contains a cursory discussion related to Feynman's four-outcome quasi-distribution. All our observations on that issue happened to be known as well, but we have yet to research the history of the negative probabilities in the discrete case. We intend to address the discrete case elsewhere.

\begin{quotation}

\Q So what is the meaning of negative probability?

\A We don't know.

\Q The use of negative probabilities to validate the quantum character of a quantum system reminds me proofs by contradiction. Assume that the behavior is classical, produce a unique joint distribution, prove the existence of negative values and establish a contradiction. If this is the only use of negative probabilities then there is no need to interpret them semantically.

\A There are some attempts to use negative probabilities as a measure of ``quantumness'' \cite{Veitch1,Veitch2}. We think that the jury is still out.

\Q I have yet another question. Recently, in this very column, Samson Abramsky wrote about contextuality which is another manifestation of non-classical behavior of quantum systems \cite{Abramsky}. I wonder what is the relation, if any, between negative probabilities and contextuality.

\A The discussion of that relation is beyond the scope of this paper. But please have a look at Robert Spekkens's article \cite{Spekkens} with a rather telling title ``Negativity and contextuality are equivalent notions of
nonclassicality.''

\end{quotation}

\subsection*{Acknowledgment}

Many thanks to Nathan Wiebe who was our guide to the literature and the state of art on quasi-probabilities.
\section{Preliminaries}
\label{sec:pre}

We tried to make the paper as accessible as possible; hence this section. We still assume some familiarity with mathematical analysis. By default in this paper integrals are from $-\infty$ to $+\infty$. The baby quantum theory that we use is covered in \S3 of book \cite{Hall} titled ``A first approach to quantum mechanics,''.

\subsection{Fourier transform}
\label{sub:fourier}

The forward Fourier transform sends a function $f(x)$ to
\[
 \hat f(\xi)  =
    \frac1{\sqrt{2\pi}}\int f(x)\,e^{-i\xi x}\,dx.
\]
and the inverse Fourier transform sends a function $g(\xi)$ to
\[
 \check g(x) =
    \frac1{\sqrt{2\pi}} \int g(\xi)e^{i\xi x}\,d\xi,\\
\]
Mathematically $x$ and $\xi$ are real variables. In applications, the dimension of $\xi$ is the inverse of that of $x$ so that $\xi x$ is a pure number.

The forward and inverse Fourier transforms are defined also for functions of several variables. In particular,
\begin{align*}
 \hat f(\xi,\eta)&= \frac1{2\pi}
    \iint f(x,y)\, e^{-i(\xi x + \eta y)}\,dx\,dy,\\
 \check g(x,y)&= \frac1{2\pi} \iint g(\xi,\eta) e^{i(\xi x + \eta y)}\, d\xi\,d\eta.
\end{align*}

\begin{quotation}
\Q What about the convergence of the integrals? Are you going to ignore such details?

\A Yes, we are going to ignore such details. But Fourier transforms are used, with full mathematical rigor, even in some situations where the integrals don't converge.

\Q I do not understand this.

\A The idea is to first define the Fourier transform as an operator on nice functions in $L^2(\R)$, for which the integrals clearly converge. Informally a function $f(x)$ is nice if it and its derivatives $f'(x), f''(x), f'''(x), \dots$ approach zero very rapidly as $x\to\infty$.
The Fourier transform is an isometry on these nice functions, and the nice functions are dense in $L^2(\R)$, so the isometry extends to all of $L^2$. Details can be found in books on real analysis, like \cite{Kingman+} and \cite{Rudin}; alternatively, see \cite[Appendix~A.3.2]{Hall}.

\end{quotation}

\subsection{Dirac's delta function}
\label{sub:delta}

Dirac's $\delta$-function is a generalized function such that for any nice function $f$,
\[
 \int f(x) \delta(x) dx = f(0).
\]
It follows that
\[
 \int f(x) \delta(x-a) dx =
 \int f(x+a) \delta(x) dx = f(a).
\]

Some divergent integrals, e.g. $\int e^{itx}dt$, can be seen as generalized functions in that sense. In fact, as generalized functions,
\[
  \int e^{itx}dt = 2\pi\delta(x).
\]

Indeed,
\begin{align*}
\int dx\,f(x)\,\int e^{itx}dt
  &= \sqrt{2\pi} \int dt\,
     \frac1{\sqrt{2\pi}} \int f(x) e^{itx} dx\\
  &= \sqrt{2\pi} \int \check f(t)\,dt\\
  &= 2\pi \cdot \frac1{\sqrt{2\pi}}
  \int \check f(t)e^{-it0}dt = 2\pi f(0).
\end{align*}

\begin{quotation}
\Q Are these nice functions the same as the nice functions mentioned earlier.

\A Yes, they are.
\end{quotation}

\subsection{Exponential operators}
\label{sub:exponential}

The exponential $e^O$ of an operator $O$ over a topological vector space is the operator
\[
e^O = \sum_{k=0}^\infty \frac{O^k}{k!}
= I + O + \frac12 O^2 + \frac16 O^3 + \dots
\]

If $(X\psi)(x)= x\cdot\psi(x)$
then $(e^X\psi(x))= e^x \psi$, because
\[
(e^X\psi)(x)
= \sum_{k=0}^\infty \frac1{k!} (X^k\psi)(x)
= \psi \cdot \sum_{k=0}^\infty \frac1{k!} x^k
= \psi \cdot e^x.
\]

If $D$ is the derivative operator $\frac d{dx}$, then\\
  $e^{aD}\psi(x) = \psi(x+a)$.
Indeed,
\begin{align*}
 e^{aD} \psi(x)
&=\sum_{k=0}^\infty \frac{(aD)^k \psi(x)}{k!}
 = \sum_{k=0}^\infty \frac{D^kf(x)}{k!} a^k \\
&= \psi(x) + \frac{\psi'(x)}{1!}a
        + \frac{\psi''(x)}{2!}a^2
        + \frac{\psi'''(x)}{3!}a^3 + \dots
\end{align*}
which is the Taylor series of $\psi(x+a)$ around point $x$. (Think of $a$ as $\Delta x$.)

\begin{quotation}

\Q What functions $f$ are you talking about? The Taylor series expansion of $f$ suggests that $f$ is analytic, that is real-analytic.

\A Our intention is that $f$ ranges over $L^2$. By the proof above, $e^{aD}$ is a shift $f(x)\mapsto f(x+a)$ on analytic functions. In particular, $e^{aD}$ is a shift on Gaussian functions
\[
  \exp\left(-\frac{(x-b)^2}{2c^2}\right).
\]
But Gaussian functions span a dense subspace of $L^2(\R)$, and there is a unique continuous extension of $e^{aD}$ to $L^2$, namely the shift $f(x)\mapsto f(x+a)$.

\Q The exponential $e^O$ has got to be a partial operator in general.

\A Yes, $e^O(x)$ is defined whenever the oprators $O^k$ are defined at $x$, and the series
$\sum_{k=0}^\infty \frac{O^k(x)}{k!}$ converges.

\end{quotation}

\section{Joint-to-Marginal Lemma}
\label{sec:j2m}

Let $f(x,p)$ be an ordinary probability distribution or a quasi-distribution on $\R^2$. For any $z = ax+bp$ where $a,b$ are not both zero, the marginal distribution $g(z)$ of $z$ can be defined thus:
\[
g(z) =
\begin{cases}
 \displaystyle
 \frac1b \int f(x, \frac1b (z-ax))\,dx
 &\mbox{if $b\ne0$}\\
 \displaystyle
 \frac1a \int f(\frac1a (z-by), p)\,dp
 &\mbox{otherwise}
\end{cases}
\]
Here's a justification in the case $b\ne0$. We have
\begin{align*}
 p  &= \frac1b(z-ax),\\
 dp &= \frac1b(dz-a\,dx),\\
 f(x,p)\,dx\,dp &= f\big(x,\frac1b(z-ax)\big)\frac1b\,dx\,dz.
\end{align*}
We are relying here on the formalism of differential 2-forms \cite{Flanders} for area elements, so that $dx\,dz$ really means $dx\wedge dz$ and we have used that $dx\wedge dx=0$.  The use of differential forms makes computations like this easier, and it fits well with physics, e.g., with Maxwell's equations and with general relativity.  One could, however, avoid differential forms here and get the same result by considering the Jacobian determinant of the change of variables.

For any real $u\le v$, the probability that $u\le z\le v$ should be
\begin{align*}
\int_u^v g(z) dz
&= \iint_{u\le ax+bp\le v} f(x,p)\,dx\,dp \\
&= \iint_{u\le ax+bp\le v}
   \frac1b f\big(x,\frac1b(z-ax)\big)\,dx\,dz \\
&= \int_u^v dz \int_{-\infty}^\infty
   \frac1b f\big(x,\frac1b(z-ax)\big)\,dx.
\end{align*}
Since the first and last expressions coincide for all $u\le v$, we have
\[
  g(z) = \frac1b \int f\big(x,\frac1b(z-ax)\big)\,dx.
\]

\begin{lemma}[J2M]
For any $a,b$ not both zero, the following statements are equivalent.
\begin{enumerate}
\item $g(z)$ is the marginal distribution of $z = ax + bp$.
\item $\displaystyle \hat g(\zeta) =
  \sqrt{2\pi}\cdot\hat f(a\zeta,b\zeta).$
\end{enumerate}
\end{lemma}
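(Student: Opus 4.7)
My plan is to prove the equivalence by showing $(1)\Rightarrow(2)$ via direct computation and $(2)\Rightarrow(1)$ via Fourier inversion. As in the justification of the formula for $g(z)$ preceding the lemma, I will treat the case $b\ne 0$; the case $a\ne 0$, $b=0$ is symmetric.

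For $(1)\Rightarrow(2)$, I would start from the explicit formula $g(z)=\frac1b\int f\bigl(x,\tfrac1b(z-ax)\bigr)\,dx$ and plug it into the definition of $\hat g$:
\[
\hat g(\zeta)=\frac1{\sqrt{2\pi}}\int g(z)e^{-i\zeta z}\,dz
 =\frac1{\sqrt{2\pi}}\iint \frac1b f\bigl(x,\tfrac1b(z-ax)\bigr)e^{-i\zeta z}\,dx\,dz.
\]
The key step is to swap the order of integration and then, at fixed $x$, change the inner variable from $z$ to $p=\tfrac1b(z-ax)$, so that $z=ax+bp$ and $dz=b\,dp$. This turns the double integral into $\iint f(x,p)e^{-i\zeta(ax+bp)}\,dx\,dp$, which by the definition of the two-variable Fourier transform is exactly $2\pi\,\hat f(a\zeta,b\zeta)$. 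Dividing by $\sqrt{2\pi}$ gives $\hat g(\zeta)=\sqrt{2\pi}\,\hat f(a\zeta,b\zeta)$, as required.

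For $(2)\Rightarrow(1)$, I would appeal to uniqueness of the Fourier transform. Let $g_0$ denote the marginal distribution of $z=ax+bp$ determined by the formula above. By the direction just proved, $\hat{g_0}(\zeta)=\sqrt{2\pi}\,\hat f(a\zeta,b\zeta)$. Hence if $g$ is any function satisfying (2), then $\hat g=\hat{g_0}$, and applying the inverse Fourier transform yields $g=g_0$. Thus $g$ is the marginal distribution.

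The only subtle point is justifying the interchange of the $x$- and $z$-integrals and treating the change of variables in the double integral at the appropriate level of rigor. For nice $f$ in the Schwartz-type class described in \S\ref{sub:fourier} this is immediate from Fubini's theorem, and, exactly as with the Fourier transform itself, the identity then extends by continuity to the wider classes of quasi-distributions we have in mind; accordingly I would state and carry out the computation formally, flagging the extension to generalized functions as an instance of the convention announced in \S\ref{sub:fourier}.
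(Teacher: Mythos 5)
Your proposal is correct and follows essentially the same route as the paper: the forward direction is the same change of variables $p=\tfrac1b(z-ax)$ converting $\hat g(\zeta)$ into $\frac1{\sqrt{2\pi}}\iint f(x,p)e^{-i\zeta(ax+bp)}\,dx\,dp = \sqrt{2\pi}\,\hat f(a\zeta,b\zeta)$, and the converse is likewise obtained by applying the forward direction to the true marginal and invoking injectivity of the Fourier transform.
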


\begin{proof}
To prove (1)$\to$(2), suppose (1) and compare the forward Fourier transforms of $g$ and $f$:
\begin{align*}
 \hat g(\zeta)&= \frac1{\sqrt{2\pi}}\int g(z)e^{-i\zeta z}\,dz\\
 &=\frac1{\sqrt{2\pi}}\iint f\big(x,\frac1b(z-ax)\big)
   e^{-i\zeta z}\frac1b\,dx\,dz\\
 &=\frac1{\sqrt{2\pi}}\iint f(x,p)e^{-i\zeta(ax+bp)}\,dx\,dp.\\[1ex]
 \hat f(\xi,\eta)&=
   \frac1{2\pi}\,\iint f(x,p)e^{-i(\xi x+\eta p)}\,dx\,dp.
\end{align*}
We have
$\hat g(\zeta) = \sqrt{2\pi} \hat f(a\zeta,b\zeta)$.

To prove (2)$\to$(1), suppose (2) and use the implication (1)$\to$(2). If $h$ is the marginal distribution of $z = ax + by$ then
\[
 \hat h(\zeta) = \sqrt{2\pi}\cdot\hat f(a\zeta,b\zeta) =
 \hat g(\zeta),
\]
and therefore $g = h$.
\end{proof}

\begin{corollary}\label{cor:j2m}
For any real $\alpha,\beta$ not both zero,
$\hat f(\alpha,\beta) =
\frac1{\sqrt{2\pi}}\, \hat g(\zeta)$
where
$g(z)$ is the marginal distribution for the linear combination $z=ax+bp$ such that $\alpha=a\zeta$, $\beta=b\zeta$ for some $\zeta$.
\end{corollary}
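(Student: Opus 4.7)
The plan is to unpack the statement as nothing more than a rearrangement of part~(2) of the J2M Lemma. Given $(\alpha,\beta)\neq(0,0)$, I first need to exhibit a decomposition $\alpha = a\zeta$, $\beta = b\zeta$ with $(a,b)\neq(0,0)$; the obvious choice is $\zeta = 1$, $a = \alpha$, $b = \beta$, but any $\zeta\neq 0$ works equally well with $a = \alpha/\zeta$, $b = \beta/\zeta$. With such a decomposition fixed, $z = ax + bp$ is a nondegenerate linear combination, so its marginal distribution $g(z)$ is defined via the formula at the start of \S\ref{sec:j2m}.

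Next I would directly invoke part~(2) of the J2M Lemma with these parameters to get
$$\hat g(\zeta) \;=\; \sqrt{2\pi}\,\hat f(a\zeta,b\zeta) \;=\; \sqrt{2\pi}\,\hat f(\alpha,\beta),$$
and divide through by $\sqrt{2\pi}$ to reach the stated identity. There is no real obstacle here, since the substantive work was already done in proving the J2M Lemma itself; this corollary is just a change of variables from the pair $(a,b)$ together with a Fourier dual variable $\zeta$ to the single point $(\alpha,\beta)=(a\zeta,b\zeta)$ in the frequency plane.

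One consistency point deserves a sentence of comment: the decomposition $(a,b,\zeta)$ is far from unique---for instance, rescaling $(a,b)\mapsto(ca,cb)$ and $\zeta\mapsto\zeta/c$ leaves $(\alpha,\beta)$ unchanged---yet the right-hand side $\hat f(\alpha,\beta)$ depends only on $(\alpha,\beta)$. Hence different choices of $(a,b,\zeta)$ must produce the same value of $\hat g(\zeta)$, and the statement of the corollary is unambiguous. This consistency is automatic from the derived identity and needs no separate verification.
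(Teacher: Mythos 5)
Your proposal is correct and matches the paper's (implicit) reasoning: the corollary is simply the statement $\hat g(\zeta)=\sqrt{2\pi}\,\hat f(a\zeta,b\zeta)$ from the J2M Lemma, read off at a chosen decomposition $\alpha=a\zeta$, $\beta=b\zeta$ and divided by $\sqrt{2\pi}$. Your remark that the (non-unique) choice of $(a,b,\zeta)$ cannot affect the value is a sensible extra check, but otherwise this is exactly the paper's route.
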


\section{Wigner uniqueness}
\label{sec:wigner}

The purpose of this section is to prove the Wigner Uniqueness proposition.
For simplicity we work with one particle moving in one dimension, but everything we do in this section generalizes in a routine way to more particles in more dimensions.

In classical mechanics, the position $x$ and momentum $p$ of the particle determine its current state. The set of all possible states is the phase space of the particle. By Corollary~\ref{cor:j2m}, an ordinary distribution $f(x,p)$ on the phase space is uniquely determined by its marginal distributions for all linear combinations $ax+bp$ where $a,b$ are not both zero.

In the quantum case, a state of the particle is given by a normalized (to norm 1) vector $\ket\psi$  in $L^2(\R)$. The position and momentum are given by Hermitian operators $X$ and $P$ where
\[
 (X\psi)(x)= x\cdot\psi(x)\quad\text{and}\quad
 (P\psi)(x)= -i\hbar\frac{d\psi}{dx}(x).
\]
For any $a,b$ not both zero, the linear combination $z = ax + by$ is given by the Hermitian operator $Z=aX+bP$. In a state $\ket\psi$, there is a probability distribution $g(z)$ (for the measurement) of the values of $z$. For a function $h(z)$ of $z$, the expectation of $h(z)$ is $\ang{\psi | h(Z) | \psi}$.

The following technical lemma plays a key role in our proof of the uniqueness of Wigner's quasi-distribution.

\begin{lemma}\label{lem:key}
\[
 \bra\psi e^{-i (\alpha X + \beta P)}\ket\psi =
 e^{i\alpha\beta\hbar/2}
 \int\psi^*(y)e^{-i\alpha y}\psi(y-\beta\hbar)\,dy.
\]
\end{lemma}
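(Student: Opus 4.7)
My plan is to disentangle the exponential $e^{-i(\alpha X + \beta P)}$ into a product $e^{-i\alpha X}\,e^{-i\beta P}$ times a scalar phase, and then apply the two formulas from \S\ref{sub:exponential} to compute each factor's action on $\psi$. Writing $A=-i\alpha X$ and $B=-i\beta P$, the canonical commutation relation $[X,P]=i\hbar$ gives $[A,B]=-i\alpha\beta\hbar\cdot I$, a scalar operator. In this ``central commutator'' situation, the Baker--Campbell--Hausdorff formula collapses to the Weyl identity
\[
 e^{A+B} = e^A\,e^B\,e^{-[A,B]/2} = e^{-i\alpha X}\,e^{-i\beta P}\,e^{i\alpha\beta\hbar/2}.
\]
Sandwiching between $\bra\psi$ and $\ket\psi$ pulls out the scalar phase $e^{i\alpha\beta\hbar/2}$, which is exactly the prefactor appearing in the lemma.

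Next, I would evaluate $e^{-i\alpha X}\,e^{-i\beta P}\ket\psi$ pointwise, working right to left. Since $-i\beta P=-\beta\hbar\,D$ with $D=d/dx$, the shift formula $e^{aD}\psi(x)=\psi(x+a)$ (with $a=-\beta\hbar$) gives $(e^{-i\beta P}\psi)(x)=\psi(x-\beta\hbar)$. Then the multiplication formula $(e^X\psi)(x)=e^x\psi(x)$, applied to $-i\alpha X$, gives $(e^{-i\alpha X}\phi)(x)=e^{-i\alpha x}\phi(x)$. Composing, $(e^{-i\alpha X}e^{-i\beta P}\psi)(x)=e^{-i\alpha x}\psi(x-\beta\hbar)$. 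Taking the inner product with $\psi$ produces the integral $\int\psi^*(y)e^{-i\alpha y}\psi(y-\beta\hbar)\,dy$, and the identity is established.

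The one step that deserves care, and which I anticipate will be the real obstacle, is justifying the Weyl splitting, since the paper has not stated BCH and is working with unbounded operators on $L^2$. The cleanest self-contained argument is to introduce a real parameter $t$ and compare $F(t)=e^{-it(\alpha X+\beta P)}$ with $G(t)=e^{it^2\alpha\beta\hbar/2}\,e^{-it\alpha X}\,e^{-it\beta P}$. Both agree at $t=0$, and a direct differentiation using $[X,P]=i\hbar$ and $e^{-it\alpha X}\,P\,e^{it\alpha X}=P+t\alpha\hbar$ shows that each satisfies $\tfrac{d}{dt}U(t)=-i(\alpha X+\beta P)U(t)$; uniqueness for this first-order ODE (on the dense domain of analytic vectors, then extended by the isometry of the unitaries to all of $L^2$) gives $F=G$ and in particular $F(1)=G(1)$, which is the Weyl identity used above. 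Alternatively one could invoke the Stone--von Neumann framework, but the ODE argument is enough and stays within the expository spirit of the paper.
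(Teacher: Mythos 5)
Your proof is correct and follows essentially the same route as the paper: split $e^{-i(\alpha X+\beta P)}$ into $e^{-i\alpha X}e^{-i\beta P}$ times the scalar phase $e^{i\alpha\beta\hbar/2}$ (the paper gets this from Zassenhaus's formula, which collapses because $[X,P]=i\hbar$ is central), then apply the multiplication and shift formulas of \S\ref{sub:exponential} and take the inner product. Your additional $t$-parameter ODE argument justifying the Weyl splitting is a nice self-contained touch where the paper simply cites the Zassenhaus formula, but it does not change the substance of the argument.
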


\begin{proof}
We want to split the exponential into a factor with $X$ times a factor with $P$. This is not as easy as it might seem, because $X$ and $P$ don't commute.  We have, however, two pieces of good luck.  First, there is Zassenhaus's formula, which expresses the exponential of a sum of non-commuting quantities as a product of (infinitely) many exponentials, beginning with the two that one would expect from the commutative case, and continuing with exponentials of nested commutators:
\[
 e^{A+B}=e^Ae^Be^{-\frac12[A,B]}\cdots,
\]
where the ``$\cdots$'' refers to factors involving double and higher commutators.

\begin{quotation}
\Q You gave no reference to Zassenhaus's paper.

\A Apparently, Zassenhaus never published this result, but there's a paper \cite{Casas+} that shows how to compute the next terms.  It also has a pointer to early uses of the formula.
\end{quotation}

The second piece of good luck is that $[X,P]=i\hbar I$, where $I$ is the identity operator.  (In the future, we'll usually omit writing $I$ explicitly, so we'll regard this commutator as the scalar $i\hbar$.) Since that commutes with everything, all the higher commutators in Zassenhaus's formula vanish, so we can omit the ``$\cdots$'' from the formula. We have
\[
 \bra\psi e^{-i\alpha X-i\beta P}\ket\psi\\
   = \bra\psi e^{-i\alpha X}e^{-i\beta P}e^{\alpha\beta[X,P]/2}\ket\psi.
\]
The last of the three exponential factors here arose from Zassenhaus's formula as
\[
 -\frac12[-i\alpha X,-i\beta P] = \frac12\alpha\beta[X,P] = i\alpha\beta\hbar/2.
\]
That factor, being a scalar, can be pulled out of the bra-ket. Taking into account \S\ref{sub:exponential},
\[
 \bra\psi e^{-i (\alpha X + \beta P)}\ket\psi=
 e^{i\alpha\beta\hbar/2}
 \int \psi^*(y) e^{-i\alpha y} \psi(y-\beta\hbar)\,dy.
\]
\end{proof}

Now we are ready to prove the Wigner Uniqueness proposition.
Suppose that a quasi-distribution $f(x,p)$ yields correct marginal distributions for all linear combinations of position and momentum. For any real $\alpha,\beta$ not both zero, let $a,b,g,\zeta$ be as in Corollary~\ref{cor:j2m}. Then
\begin{equation}\label{1}
\begin{aligned}
\hat f(\alpha,\beta)
= \frac1{\sqrt{2\pi}} \hat g(\zeta)
&=\frac1{2\pi}\int g(z)e^{-i\zeta z}\,dz\\
 &=\frac1{2\pi} \ang{e^{-i\zeta Z}}
  =\frac1{2\pi}\bra\psi e^{-i\zeta Z}\ket\psi\\
 &=\frac1{2\pi}\bra\psi e^{-i\zeta (aX+bP)}\ket\psi.
\end{aligned}
\end{equation}
By Lemma~\ref{lem:key},
\begin{equation}\label{2}
\hat f(\alpha,\beta)
= \frac{e^{i\alpha\beta\hbar/2}} {2\pi}
   \int\psi^*(y)e^{-i\alpha y}\psi(y-\beta\hbar)\,dy.
\end{equation}
To get $f(x,p)$, apply the (two-dimensional) inverse Fourier transform.
\[
 f(x,p)=\frac1{(2\pi)^2}\iiint
 \psi^*(y)e^{-i\alpha y}e^{i\alpha\beta\hbar/2}\psi(y-\beta\hbar)
 e^{i\alpha x}e^{i\beta p}\,dy\,d\alpha\,d\beta.
\]
Collecting the three exponentials that have $\alpha$ in the exponent,
and noting that $\alpha$ appears nowhere else in the integrand,
perform the integration over $\alpha$ and (recall \S\ref{sub:delta}) get a Dirac delta function:
\[
 \int e^{-i\alpha(y-\frac{\beta\hbar}2-x)}\, d\alpha=2\pi\delta(y-x-\frac{\beta\hbar}2).
\]
That makes the integration over $y$ trivial, and what remains is
\begin{equation}\label{wigner}
f(x,p)=\frac1{2\pi}\int\psi^*(x+\frac{\beta\hbar}2)\psi(x-\frac{\beta\hbar}2)
e^{i\beta p}\,d\beta,
\end{equation}
which is Wigner's quasi-distribution.

To check that Wigner's quasi-distribution yields correct marginal distribution note that the derivation of \eqref{wigner} from \eqref{1} is reversible.
This completes the proof of the Wigner Uniqueness proposition.

\section{Weyl's correspondence}
\label{sec:weyl}

There is another approach to characterizing the Wigner quasi-probability distribution, using the expectation values for a wide class of functions rather than the marginal distributions for just the linear functions of position and momentum.  The main ingredient for this approach is a proposal by Hermann Weyl \cite[\S IV.14]{Weyl} for associating a Hermitian operator on $L^2(\R)$ to any (well-behaved) function $g(x,p)$ of position and momentum.  Weyl's proposal is to first form the Fourier transform $\hat g(\alpha,\beta)$ of $g(x,p)$, and then apply the inverse Fourier transform with the Hermitian operators $X$ and $P$ in place of the classical variables $x$ and $p$.  Thus, the Weyl correspondence associates to $g(x,p)$ the operator
\[
g(X,P) = \frac1{2\pi}
         \iint \hat g(\alpha,\beta)
e^{i(\alpha X+\beta P)}\,d\alpha\,d\beta.
\]
If one grants that this is a reasonable way of converting phase-space functions $g(x,p)$ to operators $g(X,P)$, then a desirable property of a phase-space quasi-probability distribution $f(x,p)$ would be that the expectation of $g(X,P)$ in a quantum state \ket\psi\ is the same as the expectation of $g(x,p)$ under $f(x,p)$.
We shall show that the Wigner distribution is uniquely characterized by enjoying this desirable property for all well-behaved $g$.

Indeed, the expectation of $g(X,P)$ in state \ket\psi\ is
\[
\bra\psi g(X,P)\ket\psi =
\frac1{2\pi} \iint\hat g(\alpha,\beta)
\bra\psi e^{i(\alpha X+\beta P)}\ket\psi\,d\alpha\,d\beta,
\]
and the expectation of $g(x,p)$ under the distribution $f(x,p)$ is
\[
\iint g(x,p)f(x,p)\,dx\,dp =
\iint\hat g(\alpha,\beta)\hat
f(\alpha,\beta) \,d\alpha\,d\beta.
\]
This last equation is a consequence of the fact, mentioned in \S\ref{sub:fourier}, that the Fourier transform is a unitary operator and therefore preserves the inner product structure of $L^2(\R)$.  Since these two expectations agree for all (well-behaved) $g$,
\[
\hat f(\alpha,\beta) = \frac1{2\pi}
\bra\psi e^{i(\alpha X+\beta P)}\ket\psi.
\]
But this is the part of equation\eqref{1} that was used to derive Wigner's formula \eqref{wigner}.

\begin{quotation}
\Q I wonder how Weyl arrived at his proposal.

\A Weyl presents his proposal in \cite{Weyl} without any motivation, so we don't know how he came up with it, but we can speculate.  The title of \cite{Weyl} indicates that Weyl was working in a group-theoretic context.  As a result, the Fourier transform, expressing functions on $\R$ as combinations of the characters $e^{i\alpha x}$ of the group $(\R,+)$ would be in the forefront of his considerations.  Now consider his goal --- to somehow convert a classical function $g(x,p)$ into an operator.  Roughly speaking, he would want to substitute the operators $X$ and $P$ for the classical variables $x$ and $p$. An obvious difficulty is that the same functions $g(x,p)$ might have two different expressions, for example $xp=px$, which are no longer equivalent when operators are substituted, $XP\neq PX$.  So it is reasonable to try to choose, from the many expressions for a function $g(x,p)$, one particular, reasonably canonical expression, into which one can substitute $X$ and $P$.  The Fourier expansion, $\int \hat g(\alpha,\beta)\exp(i(\alpha x+\beta p))\,dx\,dp$ has those properties.  It depends only on the function $g$, not on how one chooses to express it, and there is no problem substituting $X$ and $P$ for $x$ and $p$.
\end{quotation}

\section{Feynman and spins}
\label{sec:feynman}

Richard Feynman studied ``an analogue of the Wigner function for a spin $\frac12$ system or other two state system'' \cite{Feynman1987}.  He chose the $z$ and $x$ components of the spin to serve as the analogs of the position and momentum in Wigner's formula.

\Q His case should be much simpler than Wigner's case.

\A Not necessarily. While the commutator $[X,P]$ is a scalar, the commutator of the $z$ and $x$ components of a spin is the $y$-component times $i\hbar$. This is but one of several complications.

\Q Is Feynman's quasi-distribution determined by the correct marginals for all linear combinations of the $x$ and $z$ spins?

\A That question sounds reasonable until you look at it a little more closely.
To fix notation, let's describe spin by means of the standard Pauli matrices
\[
X=
\begin{pmatrix}
  0&1\\1&0
\end{pmatrix},\qquad Y=
\begin{pmatrix}
  0&-i\\i&0
\end{pmatrix},\qquad Z=
\begin{pmatrix}
  1&0\\0&-1
\end{pmatrix}.
\]
The usual matrix representation of the spins for a spin $\frac12$ particle is given by these matrices divided by 2, but it's convenient to skip those extra factors $\frac12$; if you like, imagine that we are measuring angular momentum in units of $\hbar/2$ instead of $\hbar$.

So each of our matrices has eigenvalues $\pm1$, with $+1$ meaning spin along the corresponding positive axis and $-1$ along the corresponding negative axis.  For example, the two basis states of spin up and spin down along the $z$ axis are the eigenvectors for eigenvalues $1$ and $-1$ of $Z$; equivalently, they correspond to eigenvalues $1$ and $0$ for $(I+Z)/2$, where $I$ is the identity operator. This point of view is useful because it implies that, in any state $\ket{\psi}$, $(1+\ang Z)/2$ is the probability that the $z$ spin is up. Here, as before, angle brackets denote expectations.  Similarly, $(1-\ang Z)/2$ is the probability that the $z$ spin is down.  Of course, analogous formulas apply to the $x$ and $y$ components of the spin.

Feynman, in analogy to Wigner, introduces a quasi-probability distribution $f$ for the pair of non-commuting observables $Z$ and $X$.  So $f$ has four components, $f_{++},f_{+-},f_{-+}$, and $f_{--}$, as the quasi-probability of $Z$ and $X$ having the values $\pm1$ given by the subscripts of $f$.

Now let's look at a linear combination of $Z$ and  $X$, for example the simplest nontrivial one, $Z+X$.  From the point of view of quasi-probabilities,
\begin{itemize}
  \item with probability $f_{++}$, $Z$ has value 1 and $X$ has value
    1, so $Z+X$ has value 2,
\item with probability $f_{+-}$, $Z$ has value 1 and $X$ has value
    $-1$, so $Z+X$ has value 0,
\item with probability $f_{-+}$, $Z$ has value $-1$ and $X$ has value
    1, so $Z+X$ has value 0, and
\item with probability $f_{--}$, $Z$ has value $-1$ and $X$ has value
    $-1$, so $Z+X$ has value $-2$.
\end{itemize}
Altogether, the possible values of $Z+X$ are 2, 0, and $-2$, with quasi-probabilities $f_{++}$, $f_{+-}+f_{-+}$, and $f_{--}$, respectively.

So your proposed analog of the result for Wigner's distribution would assume that $f$ is chosen so that these probabilities agree, in a given state, with the probabilities computed by quantum mechanics. But such agreement is impossible, because, according to quantum mechanics, the possible values of $Z+X$ are the eigenvalues of this operator, namely $\pm\sqrt2$, which are completely different from the $2,0,-2$ arising from the quasi-probabilities.
It is easy to check that the possible values of any nontrivial linear combination of $Z$ and $X$ are completely different from the values arising from the quasi-probabilities.

\Q OK, let's require the minimum that Feynman obviously intended, namely that $f$ should produce the correct marginal distributions of $Z$ and $X$. Does this determine $f$ uniquely?

\A At first sight, this looks promising. Requiring the correct marginals for the two variables, each having two possible values, gives us four equations for the four unknown components $f_{\pm\pm}$ of $f$:
\begin{align*}
f_{++}+f_{+-}&=\frac12(1+\ang Z)\\
f_{-+}+f_{--}&=\frac12(1-\ang Z)\\
f_{++}+f_{-+}&=\frac12(1+\ang X)\\
f_{+-}+f_{--}&=\frac12(1-\ang X).
\end{align*}
But there's redundancy in the equations; only three of them are independent, so there's one free parameter in the general solution.  In fact, it's easy to write down the general solution:
\begin{align*}
f_{++}&=\frac14(1+\ang Z+\ang X+t)\\
f_{+-}&=\frac14(1+\ang Z-\ang X-t)\\
f_{-+}&=\frac14(1-\ang Z+\ang X-t)\\
f_{--}&=\frac14(1-\ang Z-\ang X+t),
\end{align*}
where $t$ is arbitrary.
Feynman's formulas correspond to $t=\ang Y$ but we see no reason to prefer \ang Y over, for example, $-\ang Y$.

\Q Put the freedom in choosing $t$ to some use.  How about minimizing the negativity in $f$? In other words, adjust $t$ to bring $f$ as close as possible to being a genuine probability distribution.

\A That idea works better than we originally expected. One can get rid of the negativity altogether.  For each state \ket\psi, there is a choice of $t$ that makes all four components of $f$ nonnegative.

Indeed, write down the four inequalities $f_{\pm\pm}\geq0$ using the formulas above for these $f_{\pm\pm}$'s.  Solve each one for $t$. You find two lower bounds on $t$, namely
\begin{align*}
-1-\ang Z-\ang X &\text{ (from $f_{++}\ge0$)}\\
-1+\ang Z+\ang X &\text{ (from  $f_{--}\ge0$)},
\end{align*}
and two upper bounds, namely
\begin{align*}
1+\ang Z-\ang X &\text{ (from $f_{+-}\ge0$)}\\
1-\ang Z+\ang X &\text{ (from $f_{-+}\ge0$)}.
\end{align*}
An appropriate $t$ exists if and only if both of the lower bounds are less than or equal to both of the upper bounds. That gives four inequalities, which simplify to $-1\leq\ang Z\leq 1$ and $-1\leq\ang X\leq 1$.  But these are always satisfied, because the eigenvalues of $Z$ and $X$ are $\pm1$.

\Q I am confused. The uncertainty principle asserts that you cannot measure $Z$ and $X$ at once. Accordingly one would expect that the joint probability distribution $f_{\pm\pm}$ should not exist or, as in Wigner's case, should have at least one negative value.

\A The relevant difference between Wigner's and Feynman's cases seems to be this. In Wigner's case, there is naturally a rich set of marginals that the joint probability distribution is supposed to produce, namely the probability distributions of all linear combinations of the position $x$ and the momentum $p$ of the particle. In Feynman's case, the natural set of marginals is too poor, just the probability distributions of $Z$ and $X$.

\Q Did Feynman find a good use for quasi-probabilities?

\A He introduced negative probabilities in connection to a problem of infinities in quantum field theory. ``Unfortunately I never did find out how to use the freedom of allowing probabilities to be negative to solve the original problem of infinities in quantum field theory!'' \cite{Feynman1987}.

\Q Still, the idea to use quasi-probability distributions to simplify intermediate computations looks attractive to me.

\A You are in good company.


\begin{thebibliography}{99}

\bibitem{Abramsky} Samson Abramsky, ``Contextual semantics: From quantum mechanics to logic, databases, constraints, and complexity,'' in Logic in Computer Science Column, Bulletin of EATCS 113, 26 pages (2014).

\bibitem{Baker} George A. Baker, Jr., ``Formulation of quantum mechanics based on the quasi-probability distribution induced on phase space,'' Physical Review 109:6 (1958) 2198--2206. Reprinted in \cite{Zachos+}, 235--243.

\bibitem{Bertrand} Jacqueline Bertrand and Pierre Bertrand, ``A tomographic approach to Wigner's function,'' Foundations of Physics 17:4 (1987) 397--405.

\bibitem{Casas+} Fernando Casas, Ander Murua, and Mladen Nadinic, ``Efficient computation of the Zassenhaus formula,'' Computer Physics Communications 183 (2012) 2386--2391.

\bibitem{Feller} William Feller, \emph{An Introduction to Probability Theory and Its Applications, Volume 1}, John Wiley (1950), 3rd edition (1968).

\bibitem{Feynman1987} Richard P. Feynman, ``Negative probabilities,'' in \emph{Quantum Implications: Essays in Honor of David Bohm}, ed. F. D. Peat and B. Hiley, Routledge \& Kegan Paul (1987) 235--248.  Reprinted in \cite{Zachos+}, 426--439.

\bibitem{Flanders} Harley Flanders, \emph{Differential Forms with Applications to the   Physical Sciences,} Academic Press (1963). 2nd edition, Dover (1989).

\bibitem{Hall} Brian C. Hall, \emph{Quantum Theory for Mathematicians,} Springer-Verlag, Graduate Texts in Mathematics 267 (2013).

\bibitem{Kingman+} John F. C. Kingman and S. James Taylor, \emph{Introduction to Measure and Probability,} Cambridge Univ. Press (1966).


\bibitem{Moyal} Jos\'e E. Moyal, ``Quantum mechanics as a statistical theory,'' Proc. Cambridge Phil. Soc. 45 (1949) 99--124. Reprinted in \cite{Zachos+}, 167--192.


\bibitem{Nogues} G. Nogues, A. Rauschenbeutel, S. Osnaghi, P. Bertet, M. Brune, J. M. Raimond, S. Haroche, L. G. Lutterbach, and L. Davidovich, ``Measurement of a negative value for the Wigner function of radiation,'' Physical Review A 62, 054101 (2000).

\bibitem{OW} Robert F.~O'Connell and Eugene P.~Wigner, ``Quantum-mechanical distribution functions: conditions for uniqueness,'' Physics Letters 83A:4 (1981), 145--148.

\bibitem{Ourjoumtsev} Alexei Ourjoumtsev, Rosa Tualle-Broui, and Philippe Grangier, ``Quantum homodyne tomography of a two-photon Fock state,'' arXive:quant-ph/0603284 (2006).

\bibitem{Rudin} Walter Rudin, \emph{Real and Complex Analysis,} McGraw-Hill (1966). 3rd edition (1987).

\bibitem{Smithey} D.~T.~Smithey, M.~Beck, M.~G.~Raymer, and A.~Faridani, ``Measurement of Wigner distribution and the density matrix of a light mode using optical homodyne tomography,'' Physical Review Letters 70:9 1244-1247 (1993).

\bibitem{Spekkens} Robert W. Spekkens, ``Negativity and contextuality are equivalent notions of nonclassicality,'' arXiv:0710.5549v2 (2008).

\bibitem{Veitch1} Victor Veitch, Christopher Ferrie, David Gross, and Joseph Emerson, ``Negative quasi-probability as a resource for quantum computation,'' arXiv:1201.1256 (2012).

\bibitem{Veitch2} Victor Veitch, S.~A.~Hamed Mousavian, Daniel Gottesman, and Joseph Emerson, ``The resource theory of stabilizer quantum computation,'' New Journal of Physics 16 013009 (Jan.~9, 2014).

\bibitem{Weyl} Hermann Weyl, \emph{Quantenmechanik und Gruppentheorie}, Zeitschrift f\"ur Physik 46 1--46. Reprinted in \cite{Zachos+}, 45--90.

\bibitem{Wigner1932} Eugene P.~Wigner, ``On the quantum correction for thermodynamic equilibrium,'' Physical Review 40 (1932) 749--759. Reprinted in \cite{Zachos+}, 100--110.

\bibitem{Wigner1971} Eugene P. Wigner, ``Quantum mechanical distribution functions revisited,'' in W.~Yourgrau and A.~van der Merwe (eds), \emph{Perspective in Quantum Theory,} MIT Press, MA, 1971) 25--36 (1971).

\bibitem{Zachos+} Cosmas K. Zachos, David B. Fairlie, and Thomas L. Curtright (eds.), \emph{Quantum Mechanics in Phase Space: An Overview with Selected Papers,} World Scientific Series in 20th Century Physics, vol. 34 (2005).

\end{thebibliography}
\end{document}